\documentclass[letterpaper, 10pt, conference,dvipsname]{ieeeconf}

\IEEEoverridecommandlockouts                              

\overrideIEEEmargins                                      
\usepackage{cite}

\usepackage{enumitem}

\usepackage{times}
\usepackage{soul}
\usepackage{url}
\usepackage[hidelinks]{hyperref}
\usepackage[utf8]{inputenc}
\usepackage[small]{caption}
\usepackage{graphicx}
\usepackage{booktabs}
\usepackage{algorithm, algpseudocode}
\usepackage[switch]{lineno}
\usepackage{amsmath,bbold,amsthm}
\usepackage{xcolor,nicematrix}

\newcommand{\ones}{\mathbb 1}
\newcommand{\reals}{{\mathbb{R}}}





\newcommand{\mnorm}[1]{{\left\vert\kern-0.25ex\left\vert\kern-0.25ex\left\vert #1 
    \right\vert\kern-0.25ex\right\vert\kern-0.25ex\right\vert}}

\newcommand{\mc}{\mathcal}

\newcommand{\half}{\frac{1}{2}}

\newtheorem{definition}{Definition} 

\newtheorem{lemma}{Lemma}
\newtheorem{corollary}{Corollary}
\newtheorem{remark}{Remark}
\newtheorem{proposition}{Proposition}
\newtheorem{assumption}{Assumption}
\newtheorem{example}{Example}

\usepackage{diagbox}
\usepackage{times,bm}
\usepackage{booktabs}
\usepackage[hidelinks]{hyperref}
\allowdisplaybreaks[2]


\markboth{\journalname, VOL. XX, NO. XX, XXXX 2017}
{Author \MakeLowercase{\textit{et al.}}: Preparation of Papers for textsc{IEEE Control Systems
Letters} (November 2021)}

\begin{document}
\title{A Coupled Optimization Framework for Correlated Equilibria in Normal-Form Games}
\author{Sarah. H.Q. Li,
Yue Yu, 
Florian D\"{o}rfler, 
John Lygeros
\thanks{S.H.Q. Li, F. D\"{o}rfler, and J. Lygeros are with the Automatic Control Laboratory, ETH Z\"{u}rich, Physikstrasse 3, Z\"{u}rich, 8092, Switzerland (email: (sarahli@control.ee.ethz.ch, dorfler@ethz.ch, jlygeros@ethz.ch)). Y. Yu is with the Oden Institute for Computational Engineering and Sciences, The University of Texas at Austin, Austin, TX, 78712, USA (email:yueyu@utexas.edu)}%
}

\maketitle
\thispagestyle{plain}
\pagestyle{plain}

\begin{abstract}
In competitive multi-player interactions, simultaneous optimality is a key requirement for establishing strategic equilibria. This property is explicit when the game-theoretic equilibrium is the simultaneously optimal solution of coupled optimization problems. However, no such optimization problems exist for the correlated equilibrium, a strategic equilibrium where the players can correlate their actions. We address the lack of a coupled optimization framework for the correlated equilibrium by introducing an \emph{unnormalized game}---an extension of normal-form games in which the player strategies are lifted to unnormalized measures over the joint actions. We show that the set of fully mixed generalized Nash equilibria of this unnormalized game is a subset of the correlated equilibrium of the normal-form game. Furthermore, we introduce an entropy regularization to the unnormalized game and prove that the entropy-regularized generalized Nash equilibrium is a sub-optimal correlated equilibrium of the normal form game where the degree of sub-optimality depends on the magnitude of regularization. We prove that the entropy-regularized unnormalized game has a closed-form solution, and empirically verify its computational efficacy at approximating the correlated equilibrium of normal-form games.   
\end{abstract}

\section{Introduction}
As autonomous and artificial intelligence-assisted technology become ubiquitous  in our daily lives, game theory has emerged as an important tool for modeling and analyzing the interactions between autonomous agents. Within a game, player interactions are at an equilibrium when their strategies are \emph{simultaneously optimal}: no player can achieve a better objective by unilaterally deviating from its current strategy.  
For equilibria concepts such as the Nash equilibrium and the Stackelberg equilibrium, simultaneous optimality is an explicit property: these equilibria solve coupled problems within an optimization framework. The existence of such a framework has also enabled the development of gradient-based algorithms for computing game-theoretic equilibria, in particular in autonomy and artificial intelligence~\cite{li2019tolling,bach2020two,adkins2019game}. 

The correlated equilibrium is an extension of the Nash equilibrium to the joint action space. By utilizing a \emph{correlation device} that enables players to coordinate their actions, a correlated equilibrium is more effective than the Nash equilibrium at optimizing the social welfare, especially in competitive games with three or more players~\cite{aumann1987correlated}. In particular, such games arise naturally in
urban mobility~\cite{yang2019multi,li2023adaptive}, robotics~\cite{li2022congestion}, and power markets~\cite{lee2003solving}.
Since correlated equilibria form a connected polytope~\cite{nau2004geometry}, fairness and other system-level metrics can be optimized to global optimality. 

Despite its advantages, the correlated equilibrium becomes exponentially more expensive to compute as the number of players and actions increase, and the lack of an optimization framework has made it difficult to apply scalable gradient-based algorithms for finding correlated equilibria. Presently, we pose and answer the following question: \emph{can we construct a coupled optimization problem whose optimal solution is the correlated equilibrium of a normal-form game?}

\noindent\textbf{Contributions}. 
Our contribution is threefold-fold. 
\begin{enumerate}
    \item We introduce unnormalized games: an extension of normal-form games in which the player strategies are unnormalized measures over the joint action space. We prove that a strictly positive generalized Nash equilibrium of the unnormalized game is a correlated equilibrium of the normal-form game.
    \item We formulate an entropy-regularized unnormalized game, and prove that its generalized Nash equilibria are sub-optimal correlated equilibria of the normal-form game. Furthermore, we compute the generalized Nash equilibrium in closed-form, and show that its degree of sub-optimality as a correlated equilibrium depends on the entropy regularization. 
    \item We empirically verify that the generalized Nash equilibria of entropy-regularized unnormalized games are sub-optimal correlated equilibria of normal-form games. Furthermore, we empirically derive the relationship between the degree of sub-optimality vs the magnitude of the entropy regularization of these generalized Nash equilibria.
\end{enumerate}

\noindent\textbf{Relevant research}.
First introduced in~\cite{aumann1987correlated}, the correlated equilibrium exists in both finite and infinite games, including games that possess no Nash equilibria~\cite{hart1989existence}.  A correlated equilibrium definition requires both a correlation device and the resulting probability distribution over the joint action space~\cite{dhillon1996perfect}.  
The correlated equilibrium has multiple definitions and formulations under different assumptions~\cite{forges1993five,bach2020two}. 
Extensions of correlated equilibrium include constrained correlated equilibrium~\cite{boufous2023constrained}, quantal correlated equilibrium~\cite{vcerny2022quantal}, extensive-form correlated equilibrium~\cite{celli2020no}, and coarse correlated equilibrium~\cite{borowski2014learning}. A correlated equilibrium's stability properties are analyzed in~\cite{kohlberg1986strategic,okada1981stability}.
Learning dynamics that converge to the correlated equilibrium include uncoupled no-regret learning dynamics~\cite{hart2000simple} and evolution dynamics~\cite{arifovic2019learning}. Gradient-based learning dynamics are not well explored. 
\section{Equilibria concepts in normal form games}\label{sec:prelim}
We consider a normal-form game with $N$ players. Let $[A_i] (A_i \in \mathbb{N})$ denote the set of actions available to player $i$, and let $[A] = [A_1]\times\ldots\times [A_N] (A = \prod_{i\in[N]} A_i)$ denote the set of all joint actions available in the game. We denote player $i$'s action as $a_i \in [A_i]$, the action taken by player $i$'s opponents as $a_{-i}$, and every player's joint action as $a := (a_1,\ldots, a_N) \in [A]$. Under a joint action $a\in[A]$, player $i$ incurs a cost $\ell_i(a)$, where $\ell_i: [A]\mapsto \reals$ for all $i \in [N]$. 

We denote the $A_i-$dimensional  probability simplex over $[A_i]$ as $\Delta_i$, the joint probability simplex as $\Delta = \Delta_1\times \ldots \times\Delta_N$, and the $A$-dimensional probability simplex over $[A]$ as $\Delta_A$. Player $i$'s \textbf{strategy} $x_i \in \Delta_i$ is a probability distribution over the action set $[A_i]$. Under the strategy $x_i$, player $i$ selects an action $a_i$ with the probability $x_i(a_i)$ for all $a_i \in [A_i]$. The \textbf{joint strategy} $x := (x_1,\ldots, x_N) \in \prod_{i\in[N]} \Delta_i$ is the collection of all of the players' strategies. Let the opponent strategy, action space, and strategy space be respectively given by
\[x_{-i} = \prod_{j\neq i} x_i,  \ [A_{-i}] = \prod_{j\neq i} [A_j], \ \Delta_{-i} = \prod_{j\neq i} \Delta_j, \  \forall i \in [N].\] 
Under the joint strategy $x$, the expected cost for player $i$ is given by
\begin{equation}\label{eqn:cost_in_expectation}
    \mathbb{E}_{a\sim x}[\ell_i(a)] = \sum_{a_i \in [A_i]} x_i(a_i)\sum_{a_{-i}\in [A_{-i}]}x_{-i}(a_{-i})\ell_i(a_i, a_{-i}).
\end{equation} 
We use $\hat{\ell}_i:[A_i]\times \Delta_{-i}\mapsto\reals$ to denote player $i$'s expected cost for playing action $a_i$ conditioned on the other players playing the strategy $x_{-i}$:
\begin{equation}\label{eqn:cost_overloading}
    \hat{\ell}_i(a_i; x_{-i}) = \mathbb{E}[\ell_i(a_i, a_{-i}) \ | a_{j}\sim x_{j}, \ \forall j \neq i], \forall i \in [N].  
\end{equation}
Using the notation $\hat{\ell}_i(a_i; x_{-i})$, player $i$'s expected cost~\eqref{eqn:cost_in_expectation} when choosing strategy $x_i$ is given by $\mathbb{E}_{a\sim x}[\ell_i(a)]  = \sum_{a_i \in [A_i]} x_i(a_i)\hat{\ell}_i(a_i;  x_{-i})$ when the other players choose strategies $x_{-i}$.

Each player minimizes its expected cost $\mathbb{E}_{a\sim x}[\ell_i(a)]$ through unilateral changes in its own strategy $x_{i} \in \Delta_i$. 
At the joint strategy $x = (x_1,\ldots,x_N)$ and for each $i \in [N]$, if $x_i$ minimizes $\sum_{a_i \in [A_i]} x_i(a_i)\hat{\ell}_i(a_i;  x_{-i})$ simultaneously, $x$ is a Nash equilibrium.  
\begin{definition}[\textbf{Nash equilibrium}]\label{def:NE}
    The joint strategy $x^\star = (x^\star_1,\ldots, x^\star_N)\in \Delta$ is a Nash equilibrium if for each $i \in [N]$, $x_i^\star$ satisfies 
    \begin{equation}\label{eqn:ne}
        \sum_{a_i \in [A_i]}x_i^\star(a_i)\hat{\ell}_i(a_i; x^\star_{-i})\leq  \sum_{a_i \in [A_i]}x_i(a_i)\hat{\ell}_i(a_i; x^\star_{-i}), \ \forall \ x_i \in \Delta_i. 
    \end{equation}
\end{definition}
The set of Nash equilibria is equivalent to the set of KKT points of the following coupled linear program for all $i \in [N]$. In general, the set is disconnected~\cite{nau2004geometry}. 
\begin{equation}\label{eqn:ne_optimization}
\begin{aligned}
    \min_{x_i \in \Delta_i}& \sum_{a_i \in [A_i]}x_i(a_i)\hat{\ell}_i(a_i; x_{-i}), \\ 
    \mbox{ s.t. } &\sum_{a_i \in [A_i]}x_i(a_i) = 1, x_i(a_i) \geq 0, \ \forall a_i \in [A_i],
\end{aligned}
\end{equation}
The concept of Nash equilibrium extends the notion of single-player optimality to \emph{simultaneous optimality} under unilateral deviations in the players' strategies~\cite{nash1951non}. A Nash equilibrium strategy $(x_1, \ldots, x_N)$ ensures that, within player $i$'s own strategy space $\Delta_i$, the strategy $x_i$ minimizes player $i$'s expected cost when the other players play strategy $x_{-i}$. 

\noindent\textbf{Independent decision-making induces inequity}. The concept of Nash equilibrium implicitly assumes that the players make decisions independently---i.e.,  $x_i, x_j$ are independent probability distributions for all $i, j \in [N]$, $j \neq i$. While this assumption holds for game-theoretic models such as the Prisoner's Dilemma~\cite{hamburger1973n}, it fails to take advantage of the additional coordination structure that exists in large-scale cyber-physical systems. Furthermore, independent decision-making often induces inequity among players. 


\begin{example}[Vehicle standoff]\label{ex:vehi_standoff2}
Consider a single-lane road with bi-directional traffic and an unexpected pothole on its right side. Vehicles can choose to veer left or right to pass each other. Two pure Nash equilibria are (left, right) and (right, left),  but the traffic direction that chooses the pothole side will constantly be at a disadvantage. A mixed Nash equilibrium can ensure that both traffic directions are equally likely to encounter unexpected potholes, but it also means that with positive probability, both directions' vehicles will choose the same roadside and stall traffic. 
\end{example}
In Example~\ref{ex:vehi_standoff2}, a more robust solution is to \emph{coordinate} both traffic directions to alternate between the two Nash equilibria (left, right) and (right, left), without choosing the joint action pairs (right, right) or (left, left).  By doing so, the vehicles are choosing to \emph{correlate} their strategies. 
\begin{definition}[Correlated strategy]\label{def:c_strategy}
The $A$-dimensional probability distribution $y\in \Delta_A$ is a correlated strategy if
$ y(a) \geq 0 $ denotes the probability of the joint action $a  = (a_1,\ldots, a_N)$ occurring, for all $a\in [A]$ and $\sum_{a \in [A]} y(a) = 1$ \cite{roughgarden2010algorithmic,nau2004geometry}.
\end{definition} 
To employ correlated strategies, the players must have the incentive and the means to coordinate. As illustrated in
Example~\ref{ex:vehi_standoff2}, one possible incentive may be to ensure greater equity among players, and a possible coordination method is a traffic operator. 

Correlated strategies require a \emph{correlation device}~\cite{dhillon1996perfect} that coordinate actions among players in order to be implementable. Presently, we assume that such a correlation device exists for every correlated strategy satisfying Definition~\ref{def:c_strategy}, so that the players can accurately coordinate and realize every joint action $a \in [A]$~\cite{nau2004geometry}.

Every joint strategy induces a correlated strategy, but not every correlated strategy can be reduced to a joint strategy. Furthermore, all correlated strategies induced by joint strategies are rank one in their tensor form.
\begin{example}[Rank of correlated strategy tensors]\label{ex:rank_1}
Consider a two-player normal form game with finite action sets $[U]$ and $[V]$. We will cast the correlated strategy $y \in \Delta_{UV}$ to a matrix $Y \in\reals^{U\times V}$.
For a joint strategy $(x_U, x_V)\in \Delta_{U}\times\Delta_V$, the corresponding correlated strategy $Y$ is given by \[Y =  x_{U}x_{V}^{\top}.\]
Thus, all the joint strategies $x = (x_U, x_V)$ produce rank one correlated strategies in its matrix form. 

On the other hand, let $Y_0$ be any feasible correlated strategy, then the complete set of correlated strategies is given by $Y_0 + \mathcal{N} $ where $\mathcal{N}$ is defined as
\[
\mathcal{N} = \{Y \in \reals^{U\times V}_+ | \sum_{u \in [U]}\sum_{v \in [V]}Y(u,v) = \ones^\top Y \ones = 0\}, 
\]
From the constraint $\ones^\top Y \ones = 0$, matrices in $\mathcal{N}$ have a maximum rank of $\min\{U,V\} - 1$. 
\end{example}
Example~\ref{ex:rank_1}'s tensor formulation of correlated strategies is extendable to the $N$-player setting: every joint strategy $(x_1,\ldots, x_N)$, where $x_i \in \Delta_i$ for all $i \in [N]$, induces a correlated strategy $\hat{y}$ given by \begin{equation}\label{eqn:joint_action_distribution_from_NE}
    \hat{y}(a_1,\ldots,a_N) = \prod_{i \in [N]}x_i(a_i), \ \forall (a_1,\ldots,a_N) \in [A].
\end{equation}
If we cast $\hat{y} \in \Delta_{A}$ to an $N$-dimensional tensor $Y \in \reals^{A_1\times\ldots\times A_N}$, we observe that $\hat{y}$ is again a rank one tensor. 

\textbf{Comparison of solution spaces $\Delta$ and $\Delta_A$}. The joint strategy's and correlated strategy's solution spaces differ significantly in size. A joint strategy is given by $N$ independent probability distributions $x_i \in \Delta_i$, and its overall dimension is $\sum_{i}A_i$. 
On the other hand, a correlated strategy has dimension $A = \prod_{i\in[N]} A_i$. When the number of players or the number of player actions increases, the joint strategy space  $\Delta$'s dimension scales linearly, while the correlated strategy space  $\Delta_{A}$'s dimension scales \emph{exponentially}.

\textbf{Player optimality in correlated strategy space $\Delta_A$}. If a correlated strategy is optimal for player $i$, the joint action $(a_i, a_{-i})$ is played only when no other action $\hat{a}_i \in [A_i]$ can be played with $a_{-i}$ in place of $a_i$ to improve player $i$'s expected cost $\ell_i$. This notion of optimality is the same as the Nash equilibrium~\eqref{eqn:ne}. However, unlike the Nash equilibrium, defining the optimality of an independent strategy is no longer sufficiently descriptive.
We formally define correlated equilibrium as below. 
\begin{definition}[\textbf{Correlated equilibrium~\cite{aumann1987correlated}}]\label{def:CE}
    The correlated strategy $y \in \Delta_A$ (Definition~\ref{def:c_strategy}) is a correlated equilibrium if for all $i \in [N]$ and $a_i, \hat{a}_i \in [A_i]$, 
    \begin{equation}\label{eqn:ce}
        \sum_{a_{-i}\in [A_{-i}]} \Big(\ell_i(a_i, a_{-i}) - \ell_i(\hat{a}_i, a_{-i})\Big) y(a_i, a_{-i}) \leq 0.
    \end{equation}
\end{definition}
Intuitively, condition~\eqref{eqn:ce} implies that player $i$ cannot independently swap action $a_i$ for $\hat{a}_i$ while the other players play $a_{-i}$ and achieve a lower expected cost. On the set of correlated strategies induced by joint strategies, the correlated equilibrium condition~\eqref{eqn:ce} is equivalent to the Nash equilibrium condition~\eqref{eqn:ne}.
\begin{lemma}\label{lem:ce_ne_equivalence}
Over the set of correlated strategies induced by joint strategies as in~\eqref{eqn:joint_action_distribution_from_NE}, the correlated equilibrium condition~\eqref{eqn:ce} is equivalent to the Nash equilibrium condition~\eqref{eqn:ne}. 
\end{lemma}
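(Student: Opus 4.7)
The plan is to substitute the product form $y(a_i,a_{-i}) = x_i(a_i)\,x_{-i}(a_{-i})$ directly into the correlated equilibrium inequality~\eqref{eqn:ce} and reduce it, term by term, to a support-based characterization of the Nash equilibrium~\eqref{eqn:ne}. Using $y$ from~\eqref{eqn:joint_action_distribution_from_NE}, the left-hand side of~\eqref{eqn:ce} factors as
\begin{equation*}
x_i(a_i) \sum_{a_{-i}\in[A_{-i}]} \bigl(\ell_i(a_i,a_{-i}) - \ell_i(\hat{a}_i,a_{-i})\bigr)\, x_{-i}(a_{-i}),
\end{equation*}
and the inner sum collapses by the definition~\eqref{eqn:cost_overloading} of $\hat{\ell}_i$, so~\eqref{eqn:ce} becomes
\begin{equation*}
x_i(a_i)\bigl(\hat{\ell}_i(a_i;x_{-i}) - \hat{\ell}_i(\hat{a}_i;x_{-i})\bigr) \le 0, \quad \forall\, i\in[N],\ a_i,\hat{a}_i \in [A_i].
\end{equation*}

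Next, I would show that this pointwise condition is equivalent to the standard best-response characterization of a Nash equilibrium, namely that for every $i$, the support of $x_i$ is contained in $\argmin_{a_i\in[A_i]} \hat{\ell}_i(a_i;x_{-i})$. Indeed, if $x_i(a_i) > 0$ then the displayed inequality forces $\hat{\ell}_i(a_i;x_{-i}) \le \hat{\ell}_i(\hat{a}_i;x_{-i})$ for every $\hat{a}_i$, so $a_i$ minimizes $\hat{\ell}_i(\cdot;x_{-i})$; conversely, if every $a_i$ in the support of $x_i$ is a minimizer, the inequality is trivially satisfied (and if $x_i(a_i)=0$ it holds automatically).

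Finally, to close the loop with~\eqref{eqn:ne}, I would invoke the elementary linear-programming fact that a point $x_i^\star\in\Delta_i$ minimizes the linear functional $\sum_{a_i} x_i(a_i)\,\hat{\ell}_i(a_i;x_{-i}^\star)$ over the simplex $\Delta_i$ if and only if $x_i^\star$ is supported on the set of action-wise minimizers of $\hat{\ell}_i(\cdot;x_{-i}^\star)$. Combining this with the equivalence from the previous paragraph yields both directions: a product-form $y$ satisfies~\eqref{eqn:ce} for all $i,a_i,\hat{a}_i$ iff each $x_i$ is supported on best responses to $x_{-i}$, iff $(x_1,\ldots,x_N)$ satisfies~\eqref{eqn:ne}.

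I do not anticipate a serious obstacle; the argument is essentially bookkeeping. The only subtle step is the simplex-minimization equivalence used in the last paragraph, which I would either state explicitly or derive in one line by noting that $\sum_{a_i} x_i(a_i)\hat{\ell}_i(a_i;x_{-i}^\star) \ge \min_{\hat{a}_i}\hat{\ell}_i(\hat{a}_i;x_{-i}^\star)$ with equality exactly when $x_i$ places all mass on minimizers. Care is also needed in handling the degenerate case $x_i(a_i)=0$, where~\eqref{eqn:ce} is trivial in the product-form setting and therefore places no constraint on off-support actions, consistent with~\eqref{eqn:ne}.
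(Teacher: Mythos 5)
Your proposal is correct and follows essentially the same route as the paper: substitute the product form into~\eqref{eqn:ce}, factor out $x_i(a_i)$, collapse the inner sum to $\hat{\ell}_i$, and reduce both conditions to the statement that each $x_i$ is supported on minimizers of $\hat{\ell}_i(\cdot;x_{-i})$. The only difference is that the paper closes the loop by explicitly constructing Lagrange multipliers and verifying the KKT conditions of the linear program~\eqref{eqn:ne_optimization}, whereas you use the elementary simplex-minimization fact directly; since the problem is a linear program over $\Delta_i$, the two are equivalent and your shortcut is valid.
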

\begin{proof}
    Over the subset of correlated strategies induced by a joint  strategy~\eqref{eqn:joint_action_distribution_from_NE}, the correlated equilibria condition~\eqref{eqn:ce} is equivalent to
    \begin{equation}
        \begin{aligned}
       \sum_{a_{-i} \in [A_{-i}]} \Big(\ell_i(a_i, a_{-i}) - \ell_i(\hat{a}_i, a_{-i})\Big) \prod_{j\in [N]} x_j(a_j) \leq 0\\
        x_i(a_i)\sum_{a_{-i} \in [A_{-i}]} \Big(\ell_i(a_i, a_{-i}) - \ell_i(\hat{a}_i, a_{-i})\Big) \prod_{j\neq i} x_j(a_j) \leq 0 \\
         x_i(a_i)\Big(\hat{\ell}_i(a_i;x_{-i}) - \hat{\ell}_i(\hat{a}_i;x_{-i})\Big)  \leq 0, \label{eqn:pf_0}
         \end{aligned}
    \end{equation}
for all $(a_i,\hat{a}_i) \in [A_i]$, $i \in [N]$. When $x_i(a_i) > 0$,~\eqref{eqn:pf_0} implies that  $\hat{\ell}_i(a_i; x_{-i}) = \min_{\hat{a}_i \in[A_i]} \hat{\ell}(\hat{a}_i; x_{-i})$. When $x_i(a_i) = 0$, the probability simplex constraints on $x_i$ enforce the existence of $a'_i \in [A_i]$, such that $x_i(a'_i) > 0$. Applying~\eqref{eqn:pf_0} to $a'_i$, $\hat{\ell}_i(a'_i; x_{-i}) = \min_{\hat{a}_i \in[A_i]} \hat{\ell}_i(\hat{a}_i; x_{-i})$, such that the cost $\hat{\ell}_i$ at the original action $a_i$ must have a cost greater than or equal to $\hat{\ell}_i(a'_i; x_{-i})$: $\hat{\ell}_i(a'_i; x_{-i}) \leq  \hat{\ell}_i({a}_i; x_{-i})$. In summary, the following holds for all $a_i \in [A_i], \ i \in [N]$:
    \[\begin{cases}
        \hat{\ell}_i(a_i; x_{-i}) = \min_{\hat{a}_i \in[A_i]} \hat{\ell}_i(\hat{a}_i; x_{-i}) & x_i(a_i) > 0\\
        \min_{\hat{a}_i \in[A_i]} \hat{\ell}_i(\hat{a}_i; x_{-i}) \leq  \hat{\ell}_i({a}_i; x_{-i}) & x_i(a_i) = 0
    \end{cases}.\]
    We define $\lambda_i = \min_{a_i\in[A_i]}\hat{\ell}_i({a}_i; x_{-i}) $ and $\mu_i(a_i) = \hat{\ell}_i(a_i; x_{-i}) - \lambda_i$ for each player $i\in[N]$. By construction, $(x_i, \lambda_i, \mu_i)$ satisfy the KKT conditions of~\eqref{eqn:ne_optimization} when the other players use strategies $x_{-i}$ for all $i \in [N]$, given by
    \begin{equation}\label{eqn:ne_kkt}
    \begin{aligned}
        \hat{\ell}_i(a_i; x_{-i}) - \lambda_i - \mu_i(a_i) &= 0, \quad \forall a_i \in [A_i]\\
         \ x_i(a_i) \geq 0, \ \mu_i(a_i) \geq 0, \ \mu_i(a_i)x_i(a_i) &= 0, \quad \forall a_i \in [A_i],\\
         \sum_{a_i \in [A_i]} x_i(a_i) & = 1.
    \end{aligned}
    \end{equation}
    Since the optimization problem~\eqref{eqn:ne_optimization} is a linear program in $x_i$, the KKT conditions fully characterize optimality and $(x_1,\ldots,x_N)$ is a Nash equilibrium.
    
    To show that a joint strategy $x = (x_1,\ldots, x_N)$ produces a correlated equilibrium $y$~\eqref{eqn:joint_action_distribution_from_NE} if $x$ is a Nash equilibrium, we first note that $x$ is a Nash equilibrium if and only if there exist Lagrange multipliers $\lambda_i$ and $\mu_i$ for all $i\in [N]$ such that $(x_i, \lambda_i, \mu_i)$ satisfies the KKT conditions~\eqref{eqn:ne_kkt} against opponent strategy $x_{-i}$. Furthermore, the KKT condition implies that
    \begin{equation}\label{eqn:pf_1}
        \lambda_i = \min_{\hat{a}_i\in[A_i]} \hat{\ell}_i(\hat{a}_i; x_{-i}), \ \forall i \in [N].
    \end{equation} 
    When $x_i(a_i) > 0$, $\hat{\ell}_i(a_i; x_{-i})\leq \hat{\ell}_i(\hat{a}_i; x_{-i})$ for all $\hat{a}_i \in [A_i]$. When $x_i(a_i) = 0$, $x_i(a_i)\Big(\hat{\ell}_i(a_i; x_{-i})- \hat{\ell}_i(a_i; x_{-i})\Big) = 0$. We can conclude that~\eqref{eqn:pf_0} holds for all $i \in [N]$, such that the correlated strategy $y$ constructed via~\eqref{eqn:joint_action_distribution_from_NE} from the Nash equilibrium $(x_1,\ldots, x_N)$ satisfies~\eqref{eqn:ce}.
\end{proof}
Lemma~\ref{lem:ce_ne_equivalence} provides an optimization-based proof for previously known equivalence between Nash equilibria and correlated equilibria~\cite{aumann1974subjectivity,canovas1999nash,nau2004geometry}.

\textbf{Correlated equilibrium polytope}. In the original~\cite{aumann1987correlated} formulation of correlated equilibrium, the set of correlated equilibria is shown to be equivalent to the following linear polytope on the joint action space.
\begin{multline}\label{eqn:ce_polytope}
    \mc{P}_{CE} := \Big\{y \in \Delta \ | \ \ones^\top y = 0, y \geq 0\\
    \sum_{a_{-i} \in [A_{-i}]}y(a_i, a_{-i})\Big(\ell_i(a_i, a_{-i}) - \ell_i(\hat{a}_i, a_{-i})\Big) \leq 0, \\
    \forall a_i, \hat{a}_i \in [A_i], \ i \in [N]\Big\}.
\end{multline}
In~\cite{nau2004geometry}, the authors showed that in addition to being a connected polytope, $\mc{P}_{CE}$'s boundary set $\partial\mc{P}_{CE}$ contains the correlated strategies induced by Nash equilibria. However, computing the $\mc{P}_{CE}$ suffers from the curse of dimensionality both due to the dimension of $\Delta$ being exponential in the $N$ and the number of $\mc{P}_{CE}$'s constraints, $\sum_{i\in[N]}{A_i \choose 2}$, being  exponential in $A_i$. 



\section{Lifting correlated equilibrium to generalized Nash equilibrium}
While a correlated equilibrium has the interpretation of being a `simultaneously optimal' strategy in literature, this interpretation lacks an explicit optimization formulation like the one that exists for Nash equilibrium in the form of~\eqref{eqn:ne_optimization}. In this section, we formulate a novel game in which the player strategy spaces are lifted from the probability measure space over $[A_{i}]$ to the unnormalized measure space over $[A] = \prod_{i\in[N]} [A_i]$. We show that a fully mixed generalized Nash equilibrium of the lifted game over unnormalized measures corresponds to a fully mixed correlated equilibrium of the normal form game. 

\subsection{Unnormalized measures}
We first consider a relaxation of probability distributions to unnormalized measures with finite mass~\cite{swendsen2014unnormalized,frogner2015learning}. 
\begin{definition}[Unnormalized measure]\label{def:unnormal_strategy}
Given an action set $[A]$ and a corresponding probability simplex $\Delta_A$, the vector $\alpha \in \reals_+^{A}/\{0\}$ is an unnormalized measure over the sample space $[A]$ if $\alpha(a) \geq 0$ for all $a \in [A]$.  
\end{definition}
Given two unnormalized measures $\alpha_1, \alpha_2$  over $[A]$, we denote their \textbf{element-wise product} by $\alpha_1 \circ\alpha_2$, such that 
\[(\alpha_1 \circ\alpha_2) \in \reals_+^{A}, (\alpha_1 \circ\alpha_2)(a) = \alpha_1(a)\alpha_2(a), \ \forall a \in [A].\] 
We consider the decomposition of a correlated strategy $y$ (Definition~\ref{def:c_strategy})  into $N$ unnormalized measures. 
\begin{definition}[Normalized Decomposition]\label{def:decompose_ce} Given a correlated strategy $y \in \Delta_A$, we say that $(\alpha_1,\ldots,\alpha_N)$ is a normalized decomposition of $y$ and $y$ is a product of $(\alpha_1,\ldots\alpha_N)$ if 
\begin{equation}\label{eqn:decomposing_y}
   y = \alpha_1\circ\ldots\circ\alpha_N, \ \alpha_i \in \reals^A_+, \ \forall i \in [N]. 
\end{equation}
\end{definition}
The mapping $(\alpha_1,\ldots,\alpha_N)\mapsto y$ is surjective but not injective. Every correlated strategy has at least one valid decomposition, but multiple sets of unnormalized measures can produce the same correlated strategy. 
\begin{lemma}\label{lem:support_inv}
    Every correlated strategy $y \in \Delta_A$ has an infinite number of decompositions in the form of~\eqref{eqn:decomposing_y}. Furthermore, if $(\alpha_1,\ldots,\alpha_N)$ satisfies
    \begin{equation}\label{eqn:support_constraint}
        \ones^\top(\alpha_1 \circ \ldots\circ\alpha_N) = 1,   \alpha_i \geq 0, \ \forall i \in [N],
    \end{equation}
    then $y = \alpha_1 \circ \ldots\circ\alpha_N$ is a correlated strategy. 
\end{lemma}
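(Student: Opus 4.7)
The plan is to handle the two claims separately, as they are largely independent.

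For the existence and infinitude of decompositions, I would start by exhibiting one explicit decomposition: set $\alpha_1 = y$ and $\alpha_i = \ones$ (the all-ones vector in $\reals^A$) for $i = 2, \ldots, N$. By Definition~\ref{def:c_strategy}, $y \geq 0$ entry-wise, so each $\alpha_i \in \reals_+^A$; moreover, the element-wise product $y \circ \ones \circ \cdots \circ \ones$ equals $y$ entry-wise, so $(\alpha_1,\ldots,\alpha_N)$ is a valid decomposition in the sense of~\eqref{eqn:decomposing_y}. To obtain infinitely many decompositions, I would introduce a scaling parameter: for any $c > 0$, replace $\alpha_1$ by $c\,\alpha_1$ and $\alpha_2$ by $(1/c)\,\alpha_2$, leaving the remaining factors unchanged. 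The element-wise product is preserved, while the pair $(\alpha_1,\alpha_2)$ varies continuously with $c$, yielding an uncountable one-parameter family of distinct decompositions.

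For the second claim, I would verify directly that $y := \alpha_1 \circ \cdots \circ \alpha_N$ satisfies the two defining conditions of a correlated strategy in Definition~\ref{def:c_strategy}. Entry-wise non-negativity of $y$ follows immediately from the non-negativity of each $\alpha_i$, since an entry-wise product of non-negative vectors is non-negative. The normalization $\sum_{a \in [A]} y(a) = \ones^\top y = 1$ is exactly the hypothesis~\eqref{eqn:support_constraint}. Hence $y \in \Delta_A$ and is a correlated strategy.

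Both parts are essentially immediate from the definitions, so I do not anticipate any real obstacle. The only minor subtlety is that entries with $y(a) = 0$ must be accommodated; this is handled automatically by the construction, since $\alpha_1(a) = y(a) = 0$ forces $(\alpha_1 \circ \cdots \circ \alpha_N)(a) = 0$ regardless of the other factors, and the positive scaling $c > 0$ preserves the support of $y$.
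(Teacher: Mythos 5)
Your proposal is correct and follows essentially the same route as the paper: exhibit the decomposition $\alpha_1 = y$, $\alpha_j = \ones$ for $j \neq 1$, generate infinitely many decompositions by rescaling $\alpha_1$ by $c>0$ and $\alpha_2$ by $1/c$, and verify the second claim directly from Definition~\ref{def:c_strategy}. No gaps.
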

\begin{proof}
    Consider the unnormalized measures $(\alpha_1,\ldots,\alpha_N)$ that satisfy~\eqref{eqn:support_constraint} and $y =\alpha_1\circ\ldots\circ\alpha_N$, then $y$ is a correlated strategy as defined in Definition~\ref{def:c_strategy}. To show that every correlated strategy $y$ has an infinite number of decomposition's, we first show there exist at least one feasible decomposition: $\alpha_1 = y$, $\alpha_j = \ones$ for all $j \neq 1$. We then select $d\in \reals$, $d > 0$ and define the measures $\hat\alpha_1 = d\alpha_1$, $\hat\alpha_2 = \frac{1}{d}\alpha_2$, and $\hat\alpha_{j} = \alpha_j$ for all $j\notin \{1,2\}$, then the product of $(\hat\alpha_1,\hat\alpha_2,\ldots\circ\hat\alpha_N)$ satisfies $y =\hat\alpha_1\circ\hat\alpha_2\ldots,\hat\alpha_N$. Since we can select arbitrary positive real number $d$, there exists an infinite number of decompositions.  
\end{proof}
\begin{example}[Normalized decompositions]\label{ex:unnnormalized_distributions}
In a two-player, finite action game where $A_1 = A_2 \in\mathbb{N}$. We can represent the unnormalized measures by $A_1\times A_2$-dimensional matrices, $\alpha \in \reals^{A_1\times A_2}$, such that any element-wise product $\alpha_i\circ\alpha_j$ is equivalent to the Hadarmard product between their matrix counterparts. The following are all valid normalized decompositions and their correlated strategy product.
\begin{equation}\label{eqn:decomposition_NE}
    \left\{\begin{pNiceMatrix}[last-row]
0 & 0 & 0 \\
1 & 1 & 1 \\
 0 & 0 & 0\\
       & \alpha_1    &
\end{pNiceMatrix},  \ \begin{pNiceMatrix}[last-row]
0 & 0 & 1 \\
0& 0& 1 \\
 0 & 0 & 1\\
       & \alpha_2    &
\end{pNiceMatrix}\right\}, \begin{pNiceMatrix}[last-row]
0 &0 & 0 \\
0& 0& 1 \\
 0 & 0 & 0\\
       & y_{\alpha}    &
\end{pNiceMatrix},
\end{equation}
\begin{equation}\label{eqn:decomposition_positive}
   \left\{\frac{1}{45}\begin{pNiceMatrix}[last-row]
1 & 1 & 1 \\
1 & 1 & 1 \\
1 & 1 & 1\\
       & \beta_1    &
\end{pNiceMatrix},  \ \begin{pNiceMatrix}[last-row]
1 & 2 & 3 \\
4& 5& 6 \\
 7 & 8 & 9\\
       & \beta_2    &
\end{pNiceMatrix}\right\}, \frac{1}{45}\begin{pNiceMatrix}[last-row]
1 & 2 & 3 \\
4& 5& 6 \\
 7 & 8 & 9\\
       & y_{\beta}    &
\end{pNiceMatrix},  
\end{equation}
\begin{equation}\label{eqn:decomposition_non_invariant}
    \left\{\begin{pNiceMatrix}[last-row]
0 & 0 & 0 \\
0 & 0 & 1 \\
0 & 1 & 0\\
       & \gamma_1    &
\end{pNiceMatrix},  \ \half\begin{pNiceMatrix}[last-row]
1 & 0 & 0 \\
0& 1& 1 \\
 0 & 1 & 1\\
       & \gamma_2    &
\end{pNiceMatrix}\right\},  \frac{1}{2}\begin{pNiceMatrix}[last-row]
0 &0 & 0 \\
0& 0& 1 \\
 0 & 1 & 0\\
       & y_{\gamma}    &
\end{pNiceMatrix}.
\end{equation}
\end{example}

\subsection{Unnormalized game} 
We define an \textbf{unnormalized game} as the following extension of the normal-form game: instead of choosing probability distributions supported on each player's individual action space, each player $i$ chooses an unnormalized measure $\alpha_i$ over the joint action space $[A]$ as defined in Definition~\ref{def:unnormal_strategy}, constrained by the condition that the product $y = \alpha_1\circ\ldots\alpha_N$ is a correlated strategy. The player objectives remain the expected cost incurred by each player~\eqref{eqn:cost_overloading}, which is a multi-linear function of the unnormalized measures through~\eqref{eqn:decomposing_y} and~\eqref{eqn:cost_overloading}. Each player's optimization problem is given by
\begin{equation}\label{eqn:ce_individual_opt}
\begin{aligned}
    \min_{\alpha_i\in \reals_+^{A}} & \sum_{a \in [A]} {\ell}_i(a) \alpha_1(a)\ldots\alpha_N(a), \\
    \mbox{s.t. } &\sum_{a \in [A]}(\alpha_1\circ \ldots\circ\alpha_N)(a) = 1. \\ 
\end{aligned}
\end{equation}
In the unnormalized game, each player's strategy  $\alpha_i$ has the same dimension as the correlated strategy of the original finite game. Given the other players' strategies $\alpha_{-i}$, player $i$ uses their strategy $\alpha_i$ to optimize the expected cost $\sum_{a \in [A]} {\ell}_i(a) \alpha_1(a)\ldots\alpha_N(a)$, constrained by a mass constraint: $\sum_{a \in [A]}(\alpha_1\circ \ldots\circ\alpha_N)(a) = 1$. 
The optimal solution of this game is a generalized Nash equilibrium, where, in addition to minimizing their expected cost, each player's strategy must be feasible with respect to the other players' strategies. 
\begin{definition}[Generalized Nash equilibrium]\label{def:gNE}
    A joint strategy $(\alpha_1^\star,\ldots, \alpha_N^\star)$ is a generalized Nash equilibrium if for all $i \in [N]$,
    $\alpha^\star_i$ is the optimal solution to~\eqref{eqn:ce_individual_opt}.
\end{definition}
A generalized Nash equilibrium extends the standard Nash equilibrium (Definition~\ref{def:CE}) to games where each player's strategy feasibility depends on the other players' strategies. In the unnormalized game, all players share the strategy constraint given by~\eqref{eqn:support_constraint}. Next, we restrict our analysis to fully mixed unnormalized measures and show that when a set of fully mixed unnormalized measures forms a generalized Nash equilibrium, their product is a correlated equilibrium.  
\begin{assumption}[Fully mixed measures]\label{assum:support invariance} A measure $\alpha \in \reals_+^{A}$ is fully mixed if for all $a \in [A]$, $\alpha(a) > 0$. 
\end{assumption}
When a correlated strategy is fully mixed, all of its normalized decompositions $(\alpha_1,\ldots,\alpha_N)$ satisfy Assumption~\ref{assum:support invariance}. 
Games with certain player cost structures, such as zero-sum games and games with non-dominant strategies, tend to have fully mixed Nash and correlated equilibria~\cite{solan2002correlated,reny1999existence}. 
\subsection{Equivalence between generalized Nash equilibrium and correlated equilibrium}
For fully mixed correlated strategies $y$ with a normalized decomposition $(\alpha_1,\ldots,\alpha_N)$, we can show that $y$ is a correlated equilibrium of the normal form game if and only if $(\alpha_1,\ldots,\alpha_N)$ is a generalized Nash equilibrium of the unnormalized  game. 
\begin{proposition}\label{prop:ce_necessary}
    If $(\alpha_1,\ldots,\alpha_N)$ is a generalized Nash equilibrium of the unnormalized game~\eqref{eqn:ce_individual_opt}, and the $\alpha_{i}$'s satisfy Assumption~\ref{assum:support invariance}, then the product $y = \alpha_1\circ\ldots\circ\alpha_N$ is a correlated equilibria of the normal form game~\eqref{eqn:ce}.
\end{proposition}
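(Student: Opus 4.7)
The plan is to derive the CE inequality~\eqref{eqn:ce} as a first-order optimality condition of~\eqref{eqn:ce_individual_opt} by constructing, for each triple $(i, a_i, \hat{a}_i)$, a one-parameter family of feasible perturbations of $\alpha_i^\star$ that implements a ``swap of $a_i$ for $\hat{a}_i$'' at the level of the correlated strategy $y^\star := \alpha_1^\star \circ \ldots \circ \alpha_N^\star$. Because $\alpha_i^\star$ solves~\eqref{eqn:ce_individual_opt}, player $i$'s cost at the perturbation can only be at least as large as at $\alpha_i^\star$, and the corresponding first-order inequality will match~\eqref{eqn:ce} verbatim.

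Let $b_i^\star(a) := \prod_{j \neq i} \alpha_j^\star(a)$, which is strictly positive for every $a \in [A]$ by Assumption~\ref{assum:support invariance}. For each $\epsilon \in (0, 1]$ I would define
\begin{align*}
\hat{\alpha}_i(a_i, a_{-i}) &= (1-\epsilon)\,\alpha_i^\star(a_i, a_{-i}),\\
\hat{\alpha}_i(\hat{a}_i, a_{-i}) &= \alpha_i^\star(\hat{a}_i, a_{-i}) + \epsilon\,\tau(a_{-i}),
\end{align*}
with $\tau(a_{-i}) := y^\star(a_i, a_{-i}) / b_i^\star(\hat{a}_i, a_{-i})$, and $\hat{\alpha}_i(a') = \alpha_i^\star(a')$ at every joint action $a'$ whose $i$-th coordinate is neither $a_i$ nor $\hat{a}_i$. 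Well-definedness and nonnegativity of $\tau$ rest on $b_i^\star(\hat{a}_i, a_{-i}) > 0$, which is the only place the fully mixed hypothesis enters.

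I would then verify that $\hat{\alpha}_i$ is feasible for~\eqref{eqn:ce_individual_opt}: nonnegativity is immediate for $\epsilon \leq 1$, and the constraint $\sum_a \hat{\alpha}_i(a) b_i^\star(a) = 1$ holds because the mass $\epsilon \sum_{a_{-i}} y^\star(a_i, a_{-i})$ removed from the $a_i$-slice is exactly restored on the $\hat{a}_i$-slice by the defining identity $\tau(a_{-i}) b_i^\star(\hat{a}_i, a_{-i}) = y^\star(a_i, a_{-i})$.

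Using optimality of $\alpha_i^\star$ and simplifying via the same identity, the cost difference reduces to
\[
0 \leq \sum_a \ell_i(a)\,[\hat{\alpha}_i(a) - \alpha_i^\star(a)]\,b_i^\star(a) = \epsilon \sum_{a_{-i} \in [A_{-i}]} \bigl(\ell_i(\hat{a}_i, a_{-i}) - \ell_i(a_i, a_{-i})\bigr)\,y^\star(a_i, a_{-i}),
\]
which, after dividing by $\epsilon > 0$ and rearranging, is~\eqref{eqn:ce}. The only nontrivial step is engineering $\tau(a_{-i})$ so that both the additive and subtractive contributions to the cost difference collapse to the common weighting $y^\star(a_i, a_{-i})$ required by~\eqref{eqn:ce}; the fully mixed hypothesis is precisely what makes this choice admissible.
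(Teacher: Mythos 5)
Your proof is correct, but it follows a genuinely different route from the paper's. The paper works through the coupled KKT system of~\eqref{eqn:ce_individual_opt}: it introduces multipliers $\sigma_i$ and $\mu_i(a)$, observes that full mixedness forces $\mu_i(a)=0$ by complementary slackness, and concludes that $\ell_i(a)=\sigma_i$ for \emph{every} joint action $a$ --- so the correlated-equilibrium condition~\eqref{eqn:ce} holds with equality (the left-hand side is identically zero). You instead give a primal, variational argument: for each triple $(i,a_i,\hat a_i)$ you exhibit an explicit feasible deviation that transfers the mass $y^\star(a_i,\cdot)$ from the $a_i$-slice to the $\hat a_i$-slice, and since the objective of~\eqref{eqn:ce_individual_opt} is linear in $\alpha_i$ the (exact, not merely first-order) cost increase is $\epsilon\sum_{a_{-i}}\bigl(\ell_i(\hat a_i,a_{-i})-\ell_i(a_i,a_{-i})\bigr)y^\star(a_i,a_{-i})\ge 0$, which is~\eqref{eqn:ce} after dividing by $\epsilon$. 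Your construction checks out: $\tau$ is well defined and nonnegative because $b_i^\star(\hat a_i,a_{-i})>0$, and the mass constraint is preserved by the identity $\tau(a_{-i})\,b_i^\star(\hat a_i,a_{-i})=y^\star(a_i,a_{-i})$. What each approach buys: yours is more elementary (no Lagrange multipliers, no appeal to the cited GNE--KKT equivalence theorem), it yields the inequality~\eqref{eqn:ce} in exactly the form stated, and it only uses strict positivity of the opponents' measures $\alpha_j^\star$, $j\neq i$, at the target slice, so it is marginally weaker in its hypotheses. The paper's KKT route is less direct but extracts the stronger structural fact that a fully mixed generalized Nash equilibrium can exist only when each $\ell_i$ is constant on $[A]$ --- the observation recorded in the Remark that motivates the entropy regularization of Section~\ref{sec:entropy_reg_equilibria} --- which your argument does not reveal.
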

\begin{proof}
We prove this proposition by showing that  Assumption~\ref{assum:support invariance} and the coupled KKT conditions of~\eqref{eqn:ce_individual_opt} together imply the correlated equilibrium condition in~\eqref{eqn:ce}.
From~\cite[Thm.3.3]{facchinei2010generalized}, the coupled KKT conditions of~\eqref{eqn:ce_individual_opt} are necessary and sufficient for $(\alpha_1,\ldots,\alpha_N)$ to be a generalized Nash equilibrium of the unnormalized game. Therefore, we show that~\eqref{eqn:ce} holds for $y = \alpha_1\circ\ldots\circ\alpha_N$ for all the KKT points $(\alpha_1,\ldots\alpha_N)$ of~\eqref{eqn:ce_individual_opt}. 

From the unnormalized game~\eqref{eqn:ce_individual_opt} for player $i$, we assign the Lagrange multipliers $\sigma_{i } \in \reals$ for the constraint $\sum_{a \in [A]}(\alpha_1\circ \ldots\circ\alpha_N)(a) = 1$ and $\mu_{i}(a)$ for the constraints $\alpha_{i}(a)\geq 0$. The first-order gradient condition and the complementarity condition of the KKT are given by 
    \begin{multline}
        \ell_i(a) \alpha_{-i}(a) - \sigma_{i}\alpha_{-i}(a) - \mu_{i}(a) = 0,  \\ \mu_{i}(a) = \begin{cases}
        \geq 0 & \alpha_{i}(a) = 0\\
        = 0 & \alpha_{i}(a) > 0
    \end{cases},\forall (i,a) \in [N]\times [A].
    \end{multline}
When $\alpha_{-i}(a) > 0$, the KKT conditions above imply that
\begin{equation}
    \ell_i(a) \begin{cases}
     = \sigma_{i}, & \text{if }\alpha_{i}(a) > 0 \\
    \geq \sigma_{i}, & \alpha_i(a) = 0 
\end{cases}, \ \forall (i,a) \in [N]\times[A].
\end{equation}
From Assumption~\ref{assum:support invariance}, $\alpha_{-i}(a) > 0$ and $\alpha_{i}(a) > 0$ for all $a \in [A]$. Therefore, $\mu_i(a) = 0$, $\sigma_{i} = \ell_i(a)$ for all $a \in [A]$. In particular, $\ell_i(a_i, a_{-i}) = \ell_i(\hat{a}_i, a_{-i})$ for all $a_i, \hat{a}_i \in [A_i]$. The correlated equilibrium condition~\eqref{eqn:ce} $(\ell_i(a_{i}, a_{-i}) - \ell_i(\hat{a}_i, a_{-i}))y(a_i, a_{-i})$ will then evaluate strictly to $0$ for all $\hat{a}_i \in [A_i]$ and $i \in [N]$.
We conclude that $y = \alpha_{1}\circ\ldots\circ\alpha_{N}$ is a correlated equilibrium.
\end{proof}
\begin{remark}
Proposition~\ref{prop:ce_necessary} suggests that a correlated equilibrium is fully mixed only if $\ell_{i}(a) $ evaluates to the same value for all $a\in[A]$. While this may seem restrictive, we use entropy regularizations in Section~\ref{sec:entropy_reg_equilibria} to produce games in which the regularized costs are all equal for each opponent action $a_{-i}$. We can show that the generalized Nash equilibrium under regularization will approximate the correlated equilibrium of the normal-form game even if no fully mixed correlated equilibrium exists. 
\end{remark}
The reverse of Proposition~\ref{prop:ce_necessary} is not true: if $y$ is a fully mixed correlated equilibrium, then it may not have a normalized decomposition that is a generalized Nash equilibrium of the unnormalized game. 
\begin{example}[Correlated equilibria not captured by gNE]
Consider a $2\times 2$ matrix game where player one chooses the row and player two chooses the column. The player costs are given by matrices $A$ and $B$, respectively, defined as 
\[P_1 = \begin{bmatrix}
    3& 3 \\ 2 &4
\end{bmatrix}, \ P_2 =  \begin{bmatrix}
    1& 2 \\ 1 &0
\end{bmatrix}.\]
We vectorize the joint action space as $[A] = \{a_1, a_2, a_3, a_4\}$, corresponding to the counterclockwise sequence of joint actions in matrix $P_i$ starting from the top left. For $y \in \Delta_4$ to be a correlated equilibrium as defined in~\eqref{eqn:ce}, it must satisfy
\begin{equation}\label{eqn:bimatrix_ce}
 \begin{aligned}
    3y(a_1) + 3 y(a_4) - 2y(a_2) - 4 y(a_3) &\leq 0\\
    1 y(a_1) + 1 y(a_2) - 0 y(a_3) - 2 y(a_4) &\leq 0
\end{aligned}   
\end{equation}
We can verify that $y_{CE} = \begin{bmatrix}
    \frac{1}{4} & \frac{1}{4} & \frac{1}{4} & \frac{1}{4}
\end{bmatrix}$ satisfies~\eqref{eqn:bimatrix_ce}. The unnormalized game played is given by
\begin{equation}\label{eqn:bimatrix_gne}
    \begin{aligned}
        \min_{\alpha_1} & \half \Big(3\alpha_1(a_1) + 2\alpha_1(a_2) + 4 \alpha_1(a_3) + 4\alpha_1(a_4)\Big) \\
        \mbox{s.t. } & \alpha_1(a_1) + \alpha_1(a_4) + \alpha_1(a_2) + \alpha_1(a_3) = 2\\
        & \alpha_1(a_i) \geq 0, \ \forall i \in [4], 
    \end{aligned}
\end{equation}
where the mass constraint simplifies since player two's strategy is $\alpha_2(a_j) = \half$ for all $j \in [4]$.
Consider the decomposition $\alpha_1 = \alpha_2 = \begin{bmatrix}\half & \half & \half & \half\end{bmatrix}$, we can verify that $\alpha_1$ does not minimize~\eqref{eqn:bimatrix_gne}. Specifically, $\hat{\alpha}_1 = \begin{bmatrix}
    0 & 1 & \half & \half
\end{bmatrix}$ can achieve a lower objective than $\alpha_1$ against $\alpha_2 = \begin{bmatrix}\half & \half & \half & \half\end{bmatrix}$. 
\end{example}
From Propositions~\ref{prop:ce_necessary}, we can say that if the generalized Nash equilibrium of~\eqref{eqn:ce_individual_opt} is strictly positive,  then their product $y$ is a fully mixed correlated equilibrium of the original normal-form game. A natural follow-up question remains: when do strictly positive correlated strategies exist? We explore this in the following section using entropy regularizations.

\section{Entropy-Regularized Correlated Equilibria}\label{sec:entropy_reg_equilibria}
A coupled optimization formulation for the correlated equilibrium expands the set of analysis techniques applicable to it. In this section, we demonstrate how entropy regularization can be applied to the unnormalized  game to find $\epsilon$-correlated equilibria of the original normal-form game. 

We consider the entropy-regularized counterpart of the unnormalized game~\eqref{eqn:ce_individual_opt}, where each player solves the optimization problem given by 
\begin{equation}\label{eqn:ce_entropy_reg}
\begin{aligned}
  \min_{\alpha_i\in \reals_+^{A}}& \sum_{a \in [A]} \Bigg({\ell}_i(a)+ \frac{1}{\lambda_i}\Big(\log\big({\alpha_i}(a)\big) - 1\Big)\Bigg) \alpha_i(a)\alpha_{-i}(a) \\
\mbox{s.t. } & \sum_{a \in [A]}(\alpha_1\circ \ldots\circ\alpha_N)(a) = 1. 
\end{aligned}
\end{equation}
Here, $\lambda_i \geq 0$, $\lambda_i\in \reals$ denotes the magnitude of the entropy regularization. The total entropy of the correlated strategy $y = \alpha_1\circ\ldots\circ\alpha_{N}$ is given by $\sum_{a\in[A]} y(a)\log({y(a)})$, such that $\sum_{a\in[A]} y(a)\log({\alpha_i}) $ is equivalent to player $i$'s contribution to the total entropy.  For two unnormalized measures that achieve equal costs $\sum_{a \in [A]} {\ell}_i(a) \alpha_i(a) \alpha_{-i}(a)$,~\eqref{eqn:ce_entropy_reg} will favor the measure with the greater entropy and thus achieving a lower cost as defined by~\eqref{eqn:ce_individual_opt}.
\begin{remark}
In applications of game-theoretic equilibrium, player costs and transitions are often obtained from noisy and imperfect data. When faced with such modeling inaccuracy data, it is in the player's best interest to seek strategies that not only optimize their expected cost but also maximize the entropy over the action space. 
Entropy regularization has been used in single-agent reinforcement learning and Nash equilibrium to find optimal policies that are robust to modeling inaccuracies and `unforeseen changes in the environment'~\cite{guan2020learning,savas2019entropy}. The Nash equilibrium of entropy regularized games is also known as \emph{logit quantal response equilibrium}~\cite{mckelvey1995quantal} and is an important tool for finding $\epsilon$-Nash equilibria in policy gradient-based reinforcement learning~\cite{cen2021fast}. Furthermore, the logit quantal response equilibrium is shown to be a more robust equilibrium model than the Nash equilibrium  for games models involving human players~\cite{alsaleh2022road}.
\end{remark}
We can show that the entropy-regularized unnormalized distribution game has the following closed-form solution.
\begin{proposition}\label{prop:entropy_reg}
In the entropy-regularized unnormalized game where each player solves~\eqref{eqn:ce_entropy_reg}, there exists a generalized Nash equilibrium $(\alpha_1,\ldots,\alpha_N)$, where $\alpha_i$ is given by
    \begin{multline}\label{eqn:gNE_entropy_solution}
        \alpha_{i}(a) = \frac{\exp\big(-\lambda_i\ell_i(a)\big)}{\Big(\sum_a\exp\big(-\sum_j\lambda_j\ell_j(a)\big) \Big)^{\lambda_i/ \sum_{j}\lambda_j}}, \\ \forall i, a \in [N]\times[A].
    \end{multline}
\end{proposition}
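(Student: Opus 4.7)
The plan is to verify the claim by writing down and solving the KKT conditions for each player's subproblem and then showing that the proposed closed form satisfies every player's optimality condition simultaneously, along with the shared mass constraint. The problem~\eqref{eqn:ce_entropy_reg} for player $i$ is convex in $\alpha_i$ for any fixed strictly positive $\alpha_{-i}$: the coupling term $\sum_a \ell_i(a)\alpha_i(a)\alpha_{-i}(a)$ is linear in $\alpha_i$, the entropy term $\frac{1}{\lambda_i}\sum_a (\log\alpha_i(a)-1)\alpha_i(a)\,\alpha_{-i}(a)$ has second derivative $\alpha_{-i}(a)/(\lambda_i\alpha_i(a))\ge0$ along each coordinate, and the shared constraint is linear in $\alpha_i$. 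Hence the coupled KKT conditions are both necessary and sufficient for a generalized Nash equilibrium, and the implicit nonnegativity constraint $\alpha_i(a)\ge 0$ will turn out to be inactive thanks to the $\log$ barrier.

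Next I would form the Lagrangian for player $i$ with multiplier $\sigma_i$ for the mass constraint and take the partial derivative with respect to $\alpha_i(a)$. After cancelling the common factor $\alpha_{-i}(a)$ (which is strictly positive at a fully mixed profile), the first-order condition collapses to
\begin{equation*}
\ell_i(a) + \tfrac{1}{\lambda_i}\log\alpha_i(a) - \sigma_i = 0,
\end{equation*}
so
\begin{equation*}
\alpha_i(a) = \exp(\lambda_i\sigma_i)\exp(-\lambda_i\ell_i(a)) =: c_i \exp(-\lambda_i\ell_i(a)),
\end{equation*}
with $c_i>0$. This is the only nontrivial step; once it is done the rest is bookkeeping.

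Substituting into the shared mass constraint gives
\begin{equation*}
1 = \sum_{a\in[A]}\prod_{i\in[N]}\alpha_i(a) = \Big(\prod_i c_i\Big)\sum_a\exp\!\Big(-\sum_j\lambda_j\ell_j(a)\Big) = \Big(\prod_i c_i\Big)\,Z,
\end{equation*}
where $Z := \sum_a \exp(-\sum_j\lambda_j\ell_j(a))$. The first-order conditions fix the product $\prod_i c_i=1/Z$ but leave each individual $c_i$ free; this underdetermination is exactly the one-dimensional freedom already observed in Lemma~\ref{lem:support_inv}. A natural symmetric selection is $c_i = Z^{-\lambda_i/\sum_j\lambda_j}$, which clearly satisfies $\prod_i c_i = Z^{-\sum_i\lambda_i/\sum_j\lambda_j} = 1/Z$ and corresponds to the common multiplier $\sigma_i = -\log(Z)/\sum_j\lambda_j$ for every player.

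Finally I would close the argument by plugging this choice back and verifying the formula~\eqref{eqn:gNE_entropy_solution}, and by noting that nonnegativity of the resulting $\alpha_i(a)$ is automatic from the exponential. The only potential obstacle I anticipate is the non-uniqueness created by the freedom in the $c_i$'s: one must be explicit that the proposition asserts \emph{existence} of a closed-form GNE, not uniqueness, and that any rescaling $c_i\to d\,c_i$, $c_j\to c_j/d$ for $j\neq i$ produces another GNE with the same induced correlated strategy $y=\alpha_1\circ\cdots\circ\alpha_N$.
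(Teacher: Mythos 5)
Your proposal is correct and follows essentially the same route as the paper's proof: derive the coupled KKT conditions, solve the stationarity condition in the strictly positive regime to get $\alpha_i(a)=c_i\exp(-\lambda_i\ell_i(a))$, pin down $\prod_i c_i$ via the shared mass constraint, and make the symmetric choice of normalization (the paper does this by setting $\sigma_i=\sigma$ for all $i$). Your explicit remark that only the product of the $c_i$'s is determined, so the proposition is an existence statement with a rescaling freedom matching Lemma~\ref{lem:support_inv}, is a useful clarification but not a different argument.
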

\begin{proof}
Since~\eqref{eqn:ce_entropy_reg} has strongly convex objectives, convex independent constraints, and a shared constraint that is convex in each individual $\alpha_i$, the generalized Nash equilibrium of the game is equivalent to the coupled KKT points of~\eqref{eqn:ce_entropy_reg}~\cite[Thm.3.3]{facchinei2010generalized}. 
We assign the Lagrange multipliers $\sigma_i \in \reals$ to the constraint 
 $\sum_{a \in [A]}(\alpha_1\circ \ldots\circ\alpha_N)(a) = 1$ and $\mu_i(a) \in \reals_+$ to the constraints $\alpha_i(a) \geq 0$, for all $a \in [A]$. The KKT conditions of~\eqref{eqn:ce_entropy_reg} are given by 
\begin{equation}~\label{eqn:kkt_condition}
    \begin{aligned}
        0 =&  \Big(\ell_i(a) + \frac{1}{\lambda_i}\log\big(\alpha_i(a)\big)+\sigma_i\Big)\alpha_{-i}(a)- \mu_i(a), \ \forall a \in [A], \\
        1 = &\sum_{a \in [A]}(\alpha_1\circ \ldots\circ\alpha_N)(a),\\
        0 \leq & \alpha_i(a), 0 \leq \mu_i(a), 0 =\alpha_i(a) \mu_i(a),\ \forall a \in [A].
    \end{aligned}
\end{equation}
When $\alpha_{-i}(a) > 0$ and $\alpha_i(a) > 0$, the first KKT condition reduces to $0 = \ell_i(a) + \frac{1}{\lambda_i}\log\big(\alpha_i(a)\big) + \sigma_i$. We can solve for $\alpha_i(a)$ as
\begin{equation}\label{eqn:pf_5}
    \alpha_i(a) = \exp\big(-\lambda_i(\sigma_i+\ell_i(a)\big).
\end{equation}
For each joint action $a\in [A]$, the product $(\alpha_1\circ\ldots\circ\alpha_N)(a)$ is given by 
\begin{equation}\label{eqn:pf_6}
 (\alpha_1\circ\ldots\circ\alpha_N)(a) = \exp\Big(-\sum_{i}\lambda_i\sigma_i - \sum_i\lambda_i\ell_i(a)\Big).   
\end{equation}
From primal feasibility of the KKT conditions, $1 = \sum_{a \in [A]}(\alpha_1\circ \ldots\circ\alpha_N)(a)$. We combine this with~\eqref{eqn:pf_6} to derive
    \begin{equation}\label{eqn:pf_condition_2}
        \exp\big(-\sum_i \lambda_i\sigma_i\big) \sum_{a} \exp\big(-\sum_{i}\lambda_i\ell_i(a)\big) = 1.
    \end{equation}
    Let $\lambda_N = \sum_{i}\lambda_i$ and $\sigma_i = \sigma$ for all  $i \in [N]$. 
    We can then solve for $\exp(\sigma)$ in~\eqref{eqn:pf_condition_2} as
    $\Big(\sum_a \exp\big(-\sum_i\lambda_i \ell_i(a)\big)\Big)^{-1/\lambda_N}$.     Let $\beta_i(a) = \exp\big(-\lambda_i\ell_i(a)\big)$, each $\alpha_i(a)$~\eqref{eqn:pf_5} is given by 
    \begin{equation}
        \alpha_{i}(a) = \frac{\beta_i(a)}{\Big(\sum_a\prod_{j}\beta_j(a)\Big)^{\lambda_i/\lambda_N}}.
    \end{equation}
    When $\lambda_i, \ell_i(a)$ are finite, $\alpha_i(a) >0 $ for all $i, a \in [N]\times[A]$. Therefore $\alpha_1,\ldots,\alpha_N$ satisfy the KKT conditions and are the optimal solutions to the unnormalized game.
\end{proof}
The correlated equilibrium corresponding to~\eqref{eqn:gNE_entropy_solution} is 
\begin{equation}\label{eqn:ce_entropy_solution}
    y(a) = \frac{\exp\big(-\sum_j\lambda_j\ell_j(a)\big)}{\sum_a\exp\big(-\sum_j\lambda_j\ell_j(a)\big)}, \ \forall a \in [A].
\end{equation}
The resulting correlated equilibrium is a softmax function over the regularized and weighted sum of individual player costs, where the level of entropy introduced is controlled by $\lambda_i$: the smaller the $\lambda_i$ is, the closer the resulting correlated equilibrium is to the completed mixed correlated strategy, $y'(a) = 1/A $ for all $a \in [A]$. 
We note that while Proposition~\ref{prop:entropy_reg} provides one possible solution for the entropy-regularized unnormalized game~\eqref{eqn:ce_entropy_reg}, other generalized Nash equilibria exist. In particular, each strictly positive generalized Nash equilibrium is an $\epsilon$-correlated equilibrium of the original normal form game~\eqref{eqn:ce_individual_opt}, even if the original normal form game does not have any strictly positive correlated equilibrium. 
\begin{corollary}[$\epsilon$-correlated equilibrium]\label{cor:approx_ce}
If the entropy-regularized generalized Nash equilibrium $(\alpha_1,\ldots,\alpha_N)$ satisfies Assumption~\ref{assum:support invariance} for each $i \in [N]$, their product  $y$~\eqref{eqn:ce_entropy_solution} is an $\epsilon$-correlated equilibria---i.e., for all $i \in [N]$, 
   \begin{equation}\label{eqn:theoretial_bound}
       \sum_{a_{-i}} y(a_i, a_{-i})\Big(\ell_i(a_i, a_{-i}) - \ell_i(\hat{a}_i, a_{-i})\Big) \leq \frac{\epsilon_i}{\lambda_i}, \ \forall a_i, \hat{a}_i \in [A_i], 
   \end{equation}
    where  $\epsilon_i = \max_{a, \hat{a} \in[A]} \log\Big(\alpha_i(a)/\alpha_i(\hat{a})\Big)$ and $\epsilon = \max_i \epsilon_i$. 
   \end{corollary}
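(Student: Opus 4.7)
The plan is to reuse the KKT conditions derived in the proof of Proposition~\ref{prop:entropy_reg} and translate them into a bound on the correlated-equilibrium defect. Since $(\alpha_1,\ldots,\alpha_N)$ is an entropy-regularized generalized Nash equilibrium by hypothesis, it must satisfy the coupled KKT system~\eqref{eqn:kkt_condition} for some multipliers $\sigma_i$ and $\mu_i$. Under Assumption~\ref{assum:support invariance}, $\alpha_i(a)>0$ and $\alpha_{-i}(a)>0$ for every $i$ and $a$, so complementary slackness forces $\mu_i(a)=0$ everywhere. Dividing the stationarity condition by $\alpha_{-i}(a)$ then yields the pointwise identity
\begin{equation*}
\ell_i(a) = -\sigma_i - \tfrac{1}{\lambda_i}\log\bigl(\alpha_i(a)\bigr), \quad \forall (i,a)\in[N]\times[A].
\end{equation*}

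Next, I would fix any $i\in[N]$ and any pair $a_i,\hat a_i\in[A_i]$ and subtract the identity at $(a_i,a_{-i})$ from the identity at $(\hat a_i,a_{-i})$. The $\sigma_i$ term cancels, leaving
\begin{equation*}
\ell_i(a_i,a_{-i}) - \ell_i(\hat a_i,a_{-i}) = \tfrac{1}{\lambda_i}\log\!\left(\frac{\alpha_i(\hat a_i,a_{-i})}{\alpha_i(a_i,a_{-i})}\right).
\end{equation*}
Multiplying by $y(a_i,a_{-i})\geq 0$ and summing over $a_{-i}\in[A_{-i}]$ gives the key expression for the correlated-equilibrium defect in~\eqref{eqn:theoretial_bound}. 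I would then bound each log-ratio on the right-hand side uniformly by the quantity $\epsilon_i = \max_{a,\hat a\in[A]}\log(\alpha_i(a)/\alpha_i(\hat a))$; this is permitted because Assumption~\ref{assum:support invariance} guarantees all $\alpha_i$ values are strictly positive, so the ratios are finite and $\epsilon_i$ is well defined.

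To finish, I would factor the uniform bound $\epsilon_i/\lambda_i$ out of the sum and use the fact that $y$ is a correlated strategy, so $\sum_{a_{-i}} y(a_i,a_{-i}) \leq \sum_{a\in[A]} y(a) = 1$. This yields the desired inequality~\eqref{eqn:theoretial_bound}. The argument is essentially a direct consequence of stationarity plus positivity, so I do not expect a serious obstacle; the one subtlety to be careful about is signs---in particular confirming that the log-ratio is bounded above by $\epsilon_i$ regardless of whether $\alpha_i(\hat a_i,a_{-i})$ exceeds $\alpha_i(a_i,a_{-i})$ (which follows because $\epsilon_i$ is defined as a max over all ordered pairs $(a,\hat a)$, hence is nonnegative and dominates both $\log(\alpha_i(\hat a)/\alpha_i(a))$ and its negation). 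The uniform definition of $\epsilon = \max_i \epsilon_i$ then provides a single constant that works for all players simultaneously, completing the corollary.
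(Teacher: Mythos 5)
Your proposal is correct and follows essentially the same route as the paper's proof: extract the stationarity identity $\ell_i(a) = -\sigma_i - \tfrac{1}{\lambda_i}\log(\alpha_i(a))$ from the KKT system~\eqref{eqn:kkt_condition} under Assumption~\ref{assum:support invariance}, subtract to cancel $\sigma_i$, multiply by $y$ and sum over $a_{-i}$, and bound the log-ratios by $\epsilon_i$ using $\sum_a y(a)=1$. Your remark on the sign of the log-ratio is a nice touch --- the paper's equation~\eqref{eqn:pf_7} actually writes the ratio inverted relative to what stationarity gives, but as you note this is immaterial since $\epsilon_i$ is a maximum over ordered pairs.
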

\begin{proof}
From~\eqref{eqn:kkt_condition}, we derived the coupled KKT conditions of the entropy-regularized unnormalized game. When $\alpha_i(a) > 0$ for all $(i,a)\in [N]\times[A_i]$, the following holds:
\begin{equation}
    \ell_i(a) + \frac{1}{\lambda_i}\log(\alpha_i(a)) + \sigma_i= 0, \forall a \in [A].
\end{equation}
Then for any $a, {a'} \in [A]$, their cost difference is given by 
\begin{equation}\label{eqn:pf_7}
   \ell_i(a_i, a_{-i}) - \ell_i(a'_i, a_{-i}) = \frac{1}{\lambda_i}\log(\frac{\alpha_i(a_i, a_{-i})}{\alpha_i(a'_i, a_{-i})}).
   \end{equation}
We multiply~\eqref{eqn:pf_7} by $y(a_i, a_{-i})$ and sum over $a_{-i} \in [A_{-i}]$ to obtain
\begin{multline}\label{eqn:pf_8}
   \sum_{a_{-i} \in [A_{-i}]} y(a_i, a_{-i})\Big(\ell_i(a_i, a_{-i}) - \ell_i(a'_i, a_{-i})\Big) = \\
   \frac{1}{\lambda_i}\sum_{a_{-i} \in [A_{-i}]} y(a_i, a_{-i})\log(\frac{\alpha_i(a_i, a_{-i})}{\alpha_i(a'_i, a_{-i})}).
   \end{multline}
Let $\max_{a, {a'}\in[A]} \log(\frac{\alpha_i(a)}{\alpha_i(a')}) = \epsilon_i$. It follows that
\begin{equation}
\sum_{a_{-i} \in [A_{-i}]} y(a_i, a_{-i})\log(\frac{\alpha_i(a_i, a_{-i})}{\alpha_i(a'_i, a_{-i})}) \leq \sum_{a \in [A]} y(a)\epsilon_i \leq \epsilon_i.
\end{equation}
We can then conclude the proposed statement~\eqref{eqn:theoretial_bound}. 
\end{proof}

\section{Computing $\epsilon$-Correlated equilibrium}
We apply the results of Section~\ref{sec:entropy_reg_equilibria} to compute  the generalized Nash equilibrium of the unnormalized game~\eqref{eqn:ce_individual_opt} and evaluate its feasibility as an $\epsilon$-correlated equilibrium of the original normal-form game. 

We simulate normal-form games~\eqref{eqn:ne_optimization} with $N = \{2, 3\}$ players and individual action spaces of size $A=\{2,5,10\}$.

Over $K = 1000$ randomly generated normal-form games, we compute the entropy-regularized generalized Nash equilibrium as~\eqref{eqn:gNE_entropy_solution}. We plot the empirical violation with a $5\%$ standard deviation range of the correlated equilibrium condition~\eqref{eqn:ce} under $\epsilon(\text{empirical})$ and the theoretical bound $\max_i\epsilon_i/\lambda_i$~\eqref{eqn:theoretial_bound} under $\epsilon$(bound) for the regularization values $\lambda = \{0.1, 10, 30, 100, 1000, 1e4\}$. We assume that all players use the same entropy regularization, $\lambda_i = \lambda, \ \forall i \in [N]$. 
\begin{figure*}[ht!]
    \centering
    \includegraphics[width=1.9\columnwidth]{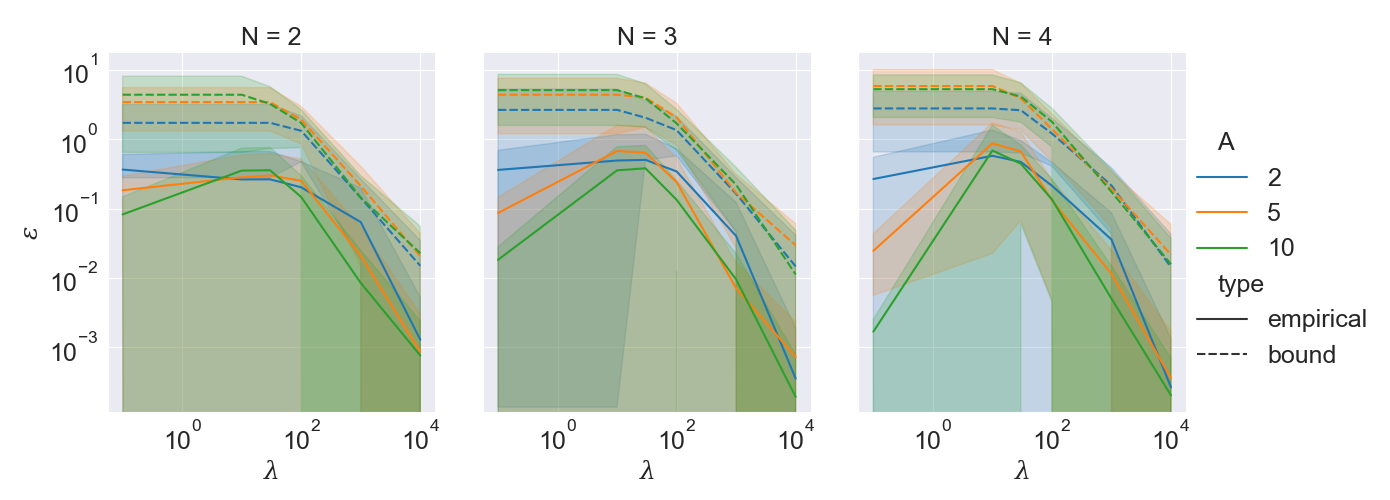}
    \caption{Empirical vs theoretical sub-optimality of the entropy-regularized generalized Nash equilibrium~\eqref{eqn:ce_entropy_solution} as a correlated equilibrium.}
    \label{fig:qce_approx}
\end{figure*}

For each game, we compute its entropy-regularized generalized Nash equilibrium $y^\star$ via~\eqref{eqn:gNE_entropy_solution} and evaluate $y^\star$'s empirical sub-optimality $\epsilon_{ce} = \epsilon$ (empirical) as 
\begin{equation}\label{eqn:empirical_suboptimality}
    \max_{\substack{i \in [N]\\a_i, a'_{i} \in [A_i]}} \sum_{a_{-i}\in [A_{-i}]} y^\star(a_i, a_{-i})\Big(\ell_i(a_i, a_{-i}) - \ell_i(a'_i, a_{-i})\Big).
\end{equation}
We note that $\epsilon_{ce}$ is equivalent to the distance between $y_{ce}$ and the correlated polytope in $\infty$ vector norm.

Finally, we note that a key challenge in applying correlated equilibrium for autonomous interactions is its poor scalability in the number of agents and actions. To this end,~\eqref{eqn:gNE_entropy_solution} provides an approximation that significantly reduces the computation complexity. We observe this in Figure~\ref{fig:qce_time}. 
\begin{figure}[ht!]
    \centering
    \includegraphics[width=\columnwidth]{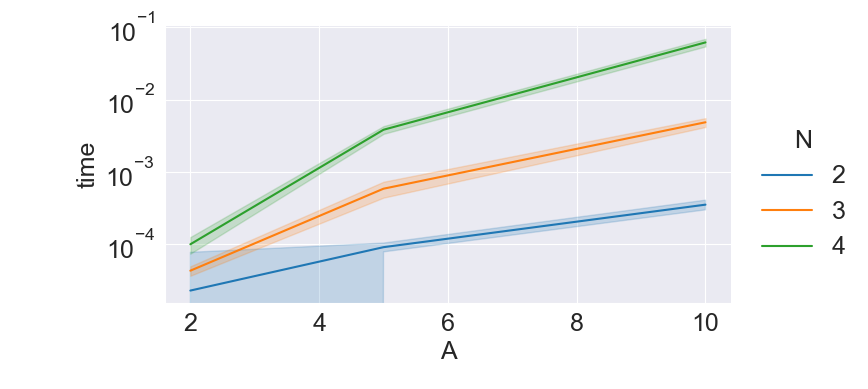}
    \caption{Computation time (seconds) of the $\epsilon$-correlated equilibrium for different numbers of players and actions.}
    \label{fig:qce_time}
\end{figure}

As shown in Figure~\ref{fig:qce_time}, the computation time still scales poorly in the number of actions and players. However, the overall computation time is significantly lower than solving for a feasible point of the correlated equilibrium polytope via linear programming. For comparison, it takes approximately $1.87$ seconds to use CVXPY to compute a correlated equilibrium for the game with $N=3$ players each with $A_i = 3$, whereas approximating it using the entropy-regularized generalized Nash equilibrium only takes $4e-3$ seconds. 

\section{Conclusion}
We introduced an extension of finite player normal-form games to coupled strategies on unnormalized measures over the joint action space and showed that for fully mixed unnormalized measures, the set of generalized Nash equilibria of the unnormalized measure game produces fully mixed correlated equilibria in the original normal-form game. Leveraging the optimization structure this imposes on the correlated equilibria, we introduce an entropy-regularized version of the unnormalized game, and show that its generalized Nash equilibrium is within $\epsilon$ distance of the correlated equilibrium polytope, where $\epsilon$ is dependent on the entropy of each player's unnormalized measure as well as the entropy regularization. Our optimization framework is the first step in connecting correlated equilibrium to the wider literature on gradient-based multi-agent learning algorithms.

\bibliographystyle{IEEEtran}
\bibliography{root}

\end{document}